\documentclass[12pt]{article}
\makeatletter
\@addtoreset{equation}{section}
\makeatother

\usepackage[pdftex]{graphicx}
\usepackage{subfigure}
\usepackage{array}
\usepackage{amsmath}

\usepackage{rotating}
\usepackage{longtable}
\usepackage{mathtools}
\usepackage{natbib}
\usepackage{makecell}

\usepackage{amsfonts}
\usepackage{array}
\usepackage{geometry}
\usepackage{color}
\usepackage{lipsum}

\usepackage{diagbox} 
\usepackage{pdfpages}
\usepackage{multicol}
\usepackage[utf8]{inputenc}
\usepackage[english]{babel}
\usepackage{mathtools}
\usepackage{bbm}
\usepackage{amssymb}
\usepackage{amsthm}
\usepackage{hyperref}
\usepackage{bm}
\usepackage{mathtools}
\usepackage{fancybox}
\usepackage{multirow}
\usepackage{authblk}
\usepackage{algorithm}
\usepackage{algpseudocode}
\usepackage{setspace}
\usepackage{caption}
\makeatletter
\newsavebox{\@brx}
\newcommand{\llangle}[1][]{\savebox{\@brx}{\(\m@th{#1\langle}\)}%
  \mathopen{\copy\@brx\kern-0.5\wd\@brx\usebox{\@brx}}}
\newcommand{\rrangle}[1][]{\savebox{\@brx}{\(\m@th{#1\rangle}\)}%
  \mathclose{\copy\@brx\kern-0.5\wd\@brx\usebox{\@brx}}}
\makeatother



\newcommand{\vertiii}[1]{{\vert\kern-0.25ex\vert\kern-0.25ex\vert #1\vert\kern-0.25ex\vert\kern-0.25ex\vert}}
\newcommand{\Vertiii}[1]{{\left\vert\kern-0.25ex\left\vert\kern-0.25ex\left\vert #1\right\vert\kern-0.25ex\right\vert\kern-0.25ex\right\vert}}
\newcommand{\tabincell}[2]{\begin{tabular}{@{}#1@{}}#2\end{tabular}}  

\usepackage{xcolor}
\geometry{left=3cm,right=3cm,top=2.0cm,bottom=2.5cm}
\renewcommand{\raggedright}{\leftskip=0pt \rightskip=0pt plus 0cm}
\title{Break Point Detection for Functional Covariance}
\author[1]{Shuhao Jiao\thanks{shjiaoqd@gmail.com}}
\author[2]{Ron D.\ Frostig\thanks{rfrostig@uci.edu}}
\author[1]{Hernando Ombao\thanks{hernando.ombao@kaust.edu.sa}}

\affil[1]{Statistics Program,  KAUST, Saudi Arabia}
\affil[2]{Department of Neurobiology and Behavior, UC Irvine, USA}


\date{}
\begin{document}
	\maketitle			
	\setlength\parindent{0pt}
	\setlength{\parskip}{1em}
	\theoremstyle{definition}
	\newtheorem{assumption}{Assumption}
	\newtheorem{theorem}{Theorem}
	\newtheorem{lemma}{Lemma}
	\newtheorem{prop}{ACRosition}
	\newtheorem{definition}{Definition}
	\newtheorem{corollary}{Corollary}
	\newtheorem{remark}{Remark}
\begin{abstract}
Many experiments record sequential trajectories where each trajectory consists of oscillations and fluctuations around zero. Such trajectories can be viewed as zero-mean functional data. When there are structural breaks (on the sequence of trajectories) in higher order moments, it is not always easy to spot these by mere visual inspection. {Motivated by this challenging problem in brain signal analysis}, we propose a detection and testing procedure to find the change point in functional covariance. 
The detection procedure is based on the cumulative sum statistics (CUSUM). The classical testing procedure for functional data depends on a null distribution which depends on {infinitely many} unknown parameters, though in practice only a finite number of these can be included for the hypothesis test of the existence of change point. This paper provides some theoretical insights on the influence of the number of parameters.
Meanwhile, the asymptotic properties of the estimated change point are developed. The effectiveness of the proposed method is numerically validated in simulation studies and an application to investigate changes in rat brain signals following an experimentally-induced stroke.\\

\noindent{\bf Key words}: Change point analysis; Functional covariance structure; Functional data analysis; Local field potentials; Weakly dependent functional data.
\end{abstract}

\section{Introduction}
Functional data analysis has attracted attention of researchers in the last few decades and many methods for structural break detection of functional data have been developed.  Here, we propose a method of detecting and testing structural breaks in covariance function. The motivation of this paper comes from a neuroscience experiment conducted in the Frostig neurobiology laboratory at 
UC Irvine (see Wann, 2017) to investigate the change in rat brain local field potentials following an induced ischemic stroke (by arterial clamping).  Local field potentials (LFPs) are recorded from 32 implanted micro-tetrodes during the pre-stroke and post-stroke phase (each phase consists of five minutes of recording) which are segmented into one-second epochs. Thus we have multivariate (32-dimensional) functional curves for each epoch and a total of 300 epochs for the pre-stroke phase and also 300 for the post-stroke phase. We expect to observe structural change in each LFP tetrode following the onset of simulated stroke. Here, all epoch trajectories fluctuate around $0$, leading to zero mean functions. In fact, it is common to preprocess and treat many brain electrical 
signals (e.g., electroencephalograms and local field potentials) to be random fluctuations around 0. {(see e.g., Ombao et al.~(2005, 2016), Motta and Ombao (2012), Fiecas and Ombao (2016), and Wu et al.~(2014))}. Thus our goal here is to develop a method for detecting the change point in the covariance function. One major benefit of developing the test procedure based on entire epochs, which are treated as random functions, is that the test procedure is robust to chance variation {(e.g., outliers or irregular extreme values)} and random errors because it is nearly impossible that all observations in an epoch are contaminated by chance variation. Moreover, the effect of chance variation or random errors can be attenuated by functional smoothing techniques.

There have been a number of methods developed for functional structural break analysis in mean function. In Berkes et al.~(2009), a testing procedure for change in the mean involves checking for structural break in the functional principal components; 
and Aue et al.~(2009a) quantified the large sample behavior of the change point estimator. Aston et al.~(2012a) extended the results to dependent functional data. Change-aligned principal components for such change point problems was developed in \cite{torgovitski2015detecting}, { to solve the problem that the leading principal component sometimes is not informative of the structural change.} Aue et al.~(2014) proposed a method to check the change point of coefficient operators in potentially non-homogeneous functional autoregressive model; and Aue et al.~(2018) proposed a fully functional detecting procedure without dimension reduction. 

There are also methods related to change point detection of covariance function or covariance matrix. Aue et al.~(2009b) studied the structural break detection problem for the covariance matrix of multivariate time series. They proposed to stack the lower triangular elements of covariance matrix and detect the structure break of the concatenated vectors.  Jaru\v skov\'a (2013) proposed a change point detection approach for functional covariance of i.i.d.~functions based on the truncated spectrum of functional covariance. Chen and Zhang (2015) proposed a novel graph-based change point detection framework, which can be applied to detect structural break in functional covariance if the graph is constructed on $\{Y_i(t)Y_i(s)\colon i\ge1\}$, where $\{Y_i(t)\colon i\ge1\}$ is the functional sequence with a change point in functional covariance.  {Avanesov and Buzun (2018) and Zhong et al.~(2019) studied the problem of change point detection of covariance matrix in a high-dimensional setting. Dette \& Kokot (2020) proposed a sup-norm approach. Aston et al.~(2012b),  Gromenko et al.~(2017) and Stoehr et al.~(2020) studied the structural break problem for bivariate or trivariate functions, specifically, spatial-temporal data and fMRI data, but they assume separability of the covariance function or apply separable fPCA, which are essentially based on low rank approximation of the covariance function, and could be overly restrictive for practical data analysis.} Aue et al.~(2020) dealt with analyzing structural break of spectrum and trace of covariance operator. {Harris et al.~(2021) proposed a scalable multiple change point detection procedure which also handles changes in variance. In contrast to these methods, we study the structural break for the complete covariance structure. No separable assumption is made thus making the proposed method suitable for a broad range of cases. }

There are other change point methods that can be applied to structural break detection for brain signals. Fryzlewicz \& Rao (2014) proposed the ``BASTA'' method for detecting multiple change points in the structure of an auto-regressive conditional heteroscedastic model. Kirch et al.~(2015) used VAR model to detect change points in multivariate time series and applied the method to EEG sequences. Cho \& Fryzlewicz (2015) proposed a sparsified binary segmentation method for the second-order structure of a multivariate time series. Schr{\"o}der \& Ombao (2017) proposed a FreSpeD method to detect the change point in the spectrum and coherence sequences of multivariate time series. Sundararajan \& Pourahmadi (2018) proposed a nonparametric method to detect multiple change points in multivariate time series based on difference in the spectral density matrices. {A general change point framework, Fr\'echet change point detection, was proposed in Dubey and M\"uller (2020).} 

Our proposed method can be used to detect the structural break in brain signals by checking the change point in the covariance function of epoch trajectories. In comparison to existing work, our method focuses on the ``big picture'' of brain signals, that is, we aim to find the change point in the sequences of functional epochs/trials instead of changes within an epoch. In addition to the robustness to chance variation, another advantage of our functional procedure lies in its ability to extract and use intra-curve information. As this new functional procedure checks the structural break of the entire covariance function, intra-curve information is incorporated, which can potentially reveal the structural break. This is discussed in more detail in the simulation studies. This paper provides a new perspective for change point problem of brain signal data. 

The major contribution of this article is developing a procedure to detect the change point in covariance function. We consider a general situation where functions are weakly dependent. Dimension reduction techniques, such as functional principal component analysis (fPCA), are very popular in functional data analysis. These techniques are able to extract the most important features, but may lead to loss of information. Indeed, this loss of information may not be crucial for functional reconstruction but could be critical for change point detection, especially when the leading principal components are orthogonal to the discrepancy. Note that even though the test procedure without dimension reduction avoids the loss of information, the null distribution still depends on {infinitely many} unknown parameters. Denoting $D_\rho$ to be the number of parameters included, Aue et al.~(2018) proposed to select $D_\rho$ to be the number of basis elements over which the initial discretely observed functional data are smoothed, but did not provide clear theoretical insights on this selection. Here, we provide some theoretical insights on the influence of the number of unknown parameters. These have not been previously discussed in the literature.

{The contribution of the work is summarized as follows:
\begin{itemize}
\item We study the CUSUM statistics for the change point detection problem of functional covariance of weakly dependent functional data, and establish complete estimation, detection and testing procedure, and the corresponding theoretical results.
\item The classical null distribution involves {infinitely many} unknown parameters, and is approximated by a truncated version. We study the convergence rate of the estimated truncated null distribution.
\item The work is motivated by the problem of detecting changes in brain signals. The brain signals at hand are LFPs which have zero mean and 
thus these signals are fluctuations around zero. The method provides complimentary information for structural break in brain signals.
\end{itemize}}

The rest of the article is organized as follows. In Section~\ref{s2}, we present some preliminaries of functional data. In Section \ref{s3}, we develop the change point model for covariance function, along with the procedure for estimation, detection and testing. We also derive the asymptotic properties of the proposed change point estimator. In Section~\ref{s4}, we report some simulation results. In Section~\ref{s5}, we analyze LFPs, and conclude in Section~\ref{s6}. Proofs of the theorems are in the supplement material.

\section{Preliminaries}
\label{s2}
{For a series of stationary random functions $\{Y_i(t)\colon t\in\mathcal{T}, i\in\mathbb{N}\}$} such that $\mathcal{T}\subset\mathbb{R}$ and $E\{\int_\mathcal{T} Y^2_i(t)dt\}<\infty$, 
{the mean function is defined as $\mu(t)=E\{Y_i(t)\},$ 
and the covariance operator and covariance function are defined respectively as
$$\Gamma(x)(t)=E\{\langle Y_i-\mu,x\rangle (Y_i(t)-\mu(t))\},\qquad C(t,s)=E\{(Y_i(t)-\mu(t))(Y_i(s)-\mu(s))\}.$$}
Define $X_i(t,s)=\{Y_i(t)-\mu(t)\}\{Y_i(s)-\mu(s)\},\ t,s\in\mathcal{T}$ as the data analogue of the covariance function, and denote $L^2(\mathcal{T}\times\mathcal{T})$ to be the space of square integrable functions defined over $\mathcal{T}\times\mathcal{T}$. {As the mean function $\mu(t)$ is unknown, $X_i(t,s)=\{Y_i(t)-\hat{\mu}(t)\}\{Y_i(s)-\hat{\mu}(s)\}$, where $\hat{\mu}(t)=N^{-1}\sum_{i=1}^{N}Y_i(t)$ is the sample average of $Y_1(t),\ldots,Y_N(t)$}. For any $X_i(t,s)$, $X_j(t,s)\in L^2(\mathcal{T}\times\mathcal{T})$, we define the inner product of the bivariate functions as 
$\llangle X_i,X_j\rrangle=\int_{\mathcal{T}\times\mathcal{T}}X_i(t,s)X_j(t,s)dtds,$
and the norm as 
{$\vertiii{X_i}^2=\int_{\mathcal{T}\times\mathcal{T}}X^2_i(t,s)dtds.$}

Obviously, $E\{X_i(t,s)\}=C(t,s)$.
In addition, we define the covariance and auto-covariance function of $X_i(t,s)$ as
$$\mathcal{\mathcal{C}}_{X,h}(t,s,\tilde{t},\tilde{s})=E\{(X_i(t,s)-C(t,s))(X_{i+h}(\tilde{t},\tilde{s})-C(\tilde{t},\tilde{s}))\},\qquad h\in\mathbb{N}.$$ 
The developed detection and test procedure involves the long-run covariance function of $\{X_i(t,s)\colon i\in\mathbb{N}\}$, defined as {the summation of all lagged covariance functions presented below}
$$LC_X(t,s,\tilde{t},\tilde{s})=\sum_{h=-\infty}^\infty \mathcal{\mathcal{C}}_{X,h}(t,s,\tilde{t},\tilde{s}),$$
and it is evident that $LC_X$ is a positive definite kernel in $L^2(\mathcal{T}\times\mathcal{T})$, and thus admits the following representation by Theorem 1.1 in Ferreira \& Menegatto (2009),
\begin{equation}
\label{eq1}
LC_X(t,s,\tilde{t},\tilde{s})=\sum_{d=1}^\infty\rho_d\psi_d(t,s)\psi_d(\tilde{t},\tilde{s}),
\end{equation}
where the bivariate eigenfunctions $\{\psi_d(t,s)\colon d\in\mathbb{N}_+\}$ form a series of orthonormal basis of $L^2(\mathcal{T}\times\mathcal{T})$, and the eigenvalues $\{\rho_d\colon d\in\mathbb{N}_+\}$ (in strictly descending order) account for the variation level of the principal components $\{\llangle X,\psi_d\rrangle\colon d\in\mathbb{N}_+\}$. 
\section{Main results}
\label{s3}
\subsection{Detection and testing procedure}
In the case of single change point, we assume the following change point model for the covariance function
\begin{equation*}
\label{eq2}
E\{X_{i}(t,s)\}=C^{(1)}(t,s)\mathbb{I}{(i\le k^*)}+ C^{(2)}(t,s)\mathbb{I}{(i>k^*)}.
\end{equation*}
We assume there is no structural break in the mean function. {This assumption is reasonable for many brain signals (e.g., local field potentials, EEG recordings), which always oscillate near zero}. The interest here is to test if the covariance function remains constant across $i$, specifically, we want to test the null hypothesis
$${H_0\colon C^{(1)}(t,s)=C^{(2)}(t,s)\ \mbox{for all}\ (t,s)}$$
against
$$H_a\colon C^{(1)}(t,s)\ne C^{(2)}(t,s)\ \mbox{for some}\ (t,s),$$
We assume that $\{Y_i(t)\colon i\in\mathbb{N}\}$ satisfy the following conditions.
\begin{assumption}
\label{a1}
There is a measurable function $f\colon S^\infty\to L^8(\mathcal{T})$, where $S$ is a measurable space, and i.i.d.~innovations $\{\epsilon_i\colon i\in\mathbb{N}\}$ taking values in $S$, so that under $H_0$, $Y_i(t)=f(\epsilon_i,\epsilon_{i-1},\ldots)$ and under $H_a$,
\begin{equation*}
Y_i(t)=\left\{
\begin{aligned}
f_1(\epsilon_i,\epsilon_{i-1},\ldots),\qquad i\le k^*\\
f_2(\epsilon_i,\epsilon_{i-1},\ldots),\qquad i>k^*
\end{aligned}
\right.
\end{equation*}
where $f_1,f_2$ are defined similarly with $f$. It is assumed that $E\|Y_i(t)\|^{8}_4<\infty$, where $\|\cdot\|_4$ denotes the $l^4$-norm. In addition, There exists a $m$-dependent sequence $\{Y_{i,m}(t)\colon i\in\mathbb{N}\}$, so that under $H_0$, $$Y_{i,m}(t)=f(\epsilon_i,\ldots,\epsilon_{i-m+1},\epsilon^*_{i-m},\epsilon^*_{i-m-1},\ldots),$$
and under $H_a$,
\begin{equation*}
Y_{i,m}(t)=\left\{
\begin{aligned}
f_1(\epsilon_i,\ldots,\epsilon_{i-m+1},\epsilon^*_{i-m},\epsilon^*_{i-m-1},\ldots),\qquad i\le k^*\\
f_2(\epsilon_i,\ldots,\epsilon_{i-m+1},\epsilon^*_{i-m},\epsilon^*_{i-m-1},\ldots),\qquad i>k^*
\end{aligned}
\right.
\end{equation*}
where $\epsilon_i^*$ is an independent copy of $\epsilon_i$, such that
$$\sum_{m=0}^\infty\{E\|Y_i(t)-Y_{i,m}(t)\|_4^{8}\}^{1/8}<\infty.$$
\end{assumption}
We now describe the CUSUM testing procedure. We first apply the detection procedure to find the change point candidate, and then apply the testing procedure to test the significance of the candidate. To proceed, we first introduce the estimators of the covariance function $C^{(1)}_k$ and $C^{(2)}_k$ for the segments $[1,k]$ and $[k+1,N]$ to be
$$\widehat{C}^{(1)}_k(t,s)=\frac{1}{k}\sum_{i=1}^kX_{i}(t,s),\qquad \widehat{C}^{(2)}_k(t,s)=\frac{1}{N-k}\sum_{i=k+1}^NX_{i}(t,s).$$
Under the null hypothesis, the difference $\widehat{C}^{(1)}_k(t,s)-\widehat{C}^{(2)}_k(t,s)$ should be close to zero for all $1< k<N$ and $(t,s)\in\mathcal{T}\times\mathcal{T}$. We incorporate a weight function to attenuate the end-point effect, and obtain the following weighted difference
$$\Delta_k(t,s)=\frac{k(N-k)}{N}\left\{\widehat{C}^{(1)}_k(t,s)-\widehat{C}^{(2)}_k(t,s)\right\}=\sum_{i=1}^kX_{i}(t,s)-\frac{k}{N}\sum_{i=1}^NX_{i}(t,s)$$
and large value of $\Delta_k(t,s)$ should be expected for some $k,t,s$ if structural break is present. The detection step is based on the following cumulative sum statistics (CUSUM)
$$T_N(\theta)=\frac{1}{N}\int\int\left\{\sum_{i=1}^{[N\theta]}X_{i}(t,s)-\frac{[N\theta]}{N}\sum_{i=1}^NX_{i}(t,s)\right\}^2dtds,$$
where $k=[N\theta]$.

To determine the change point candidate, we find the maximizer of $T_N(\theta)$. To ensure uniqueness, we define the change point candidate as 
$$\hat{\theta}^*_N=\inf\{\theta\colon T_N(\theta)=\sup_{0<\theta'<1}{T_N(\theta')}\}.$$
The next step is to apply a hypothesis test to classify the candidate change point as a change point or otherwise. The proposed test statistic is
$$T_N(\hat{\theta}^*_N)=\max_{0< \theta< 1}T_N(\theta).$$ The following theorems provide the asymptotic properties of the test statistics under $H_0$ and $H_a$.

\begin{theorem}
\label{th1}
Under Assumption \ref{a1} and $H_0$, 
$$T_N(\hat{\theta}^*_N)\overset{\mathcal{D}}\to\sup_{\theta\in[0,1]}\sum_{d=1}^\infty\rho_dB^2_{d}(\theta), \qquad N\to\infty.$$ 
where $\{B_d\colon d\in\mathbb{N}\}$ are i.i.d.~standard Brownian bridges defined on $[0,1]$.
\end{theorem}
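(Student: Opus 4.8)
The plan is to recognize $T_N(\theta)$ as a continuous functional of a partial-sum process of the centered two-way curves, and then pass to the limit through a functional invariance principle followed by the continuous mapping theorem. First I would note that the statistic is invariant to centering: writing $Z_i(t,s)=X_i(t,s)-C(t,s)$, the subtracted grand mean exactly cancels the constant $C(t,s)$, so under $H_0$
$$\sum_{i=1}^{[N\theta]}X_i(t,s)-\frac{[N\theta]}{N}\sum_{i=1}^N X_i(t,s)=\sum_{i=1}^{[N\theta]}Z_i(t,s)-\frac{[N\theta]}{N}\sum_{i=1}^N Z_i(t,s),$$
where $\{Z_i\}$ is a mean-zero, $L^p$-$m$-approximable sequence in $L^2(\mathcal{T}\times\mathcal{T})$ by Assumptions (A1)--(A3). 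Hence $T_N(\theta)=\|S_N(\theta)-([N\theta]/N)S_N(1)\|^2$, where $S_N(\theta)=N^{-1/2}\sum_{i=1}^{[N\theta]}Z_i$ is the normalized partial-sum process taking values in $L^2(\mathcal{T}\times\mathcal{T})$.

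The core step is a functional central limit theorem for $S_N$. Under the moment bound $E\|X_i\|^{2+\delta}<\infty$ from (A1) together with the $L^p$-$m$-approximability ($p>2$) from (A2)--(A3), the invariance principle for weakly dependent Hilbert-space-valued sequences (H\"ormann et al.~(2010)) yields $S_N\Rightarrow W$ in the Skorokhod space $D([0,1],L^2(\mathcal{T}\times\mathcal{T}))$, where $W$ is an $L^2$-valued Brownian motion whose covariance operator is precisely the long-run covariance $LC_X$ introduced in Section~\ref{s2}. I expect this to be the main obstacle: one must establish finite-dimensional convergence (a scalar CLT for the projected sequences $\llangle Z_i,\psi_d\rrangle$, whose long-run variances turn out to be the $\rho_d$) together with tightness in the function space, the latter being exactly where the summable $m$-approximation coefficients and the $(2+\delta)$-th moment are needed.

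Granting the invariance principle, the CUSUM integrand converges. Since $[N\theta]/N\to\theta$ uniformly, the map sending a path $w\in D([0,1],L^2(\mathcal{T}\times\mathcal{T}))$ to $\theta\mapsto w(\theta)-\theta\,w(1)$ is continuous, so $S_N(\cdot)-([N\cdot]/N)S_N(1)\Rightarrow B$, where $B(\theta)=W(\theta)-\theta W(1)$ is an $L^2$-valued Brownian bridge with the same covariance operator $LC_X$. Because the limit $B$ has continuous sample paths, the supremum functional $w\mapsto\sup_{\theta\in[0,1]}\|w(\theta)\|^2$ is continuous at $B$, and the continuous mapping theorem gives
$$T_N(\widehat{\theta}_N)=\sup_{\theta\in[0,1]}T_N(\theta)\overset{\mathcal{D}}\to\sup_{\theta\in[0,1]}\|B(\theta)\|^2.$$

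It remains to identify this limit in the stated series form. Using the generalized Mercer representation \eqref{eq1}, the covariance operator factors as $LC_X=\sum_{d}\rho_d\,\psi_d\otimes\psi_d$, so expanding $B(\theta)$ in the orthonormal eigenbasis $\{\psi_d\}$ and applying Parseval's identity gives $\|B(\theta)\|^2=\sum_{d}\llangle B(\theta),\psi_d\rrangle^2$. Because $\{\psi_d\}$ diagonalizes the covariance operator, the scalar coordinate processes $\llangle B(\cdot),\psi_d\rrangle$ are mutually independent mean-zero Gaussian processes with covariance $\rho_d(\min(\theta,\theta')-\theta\theta')$; equivalently $\llangle B(\theta),\psi_d\rrangle=\sqrt{\rho_d}\,B_d(\theta)$ for i.i.d.\ standard Brownian bridges $B_d$ on $[0,1]$. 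Substituting yields $\sup_{\theta}\|B(\theta)\|^2=\sup_{\theta}\sum_{d}\rho_d B_d^2(\theta)$, which is the claimed limit.
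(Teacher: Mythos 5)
Your proposal is correct and follows essentially the same route as the paper: reduce to the centered partial-sum process, invoke the weak invariance principle for $L^p$-$m$-approximable sequences in $L^2(\mathcal{T}\times\mathcal{T})$ (which the paper takes from Berkes et al.~(2013), proved via $m$-dependent approximation and finite-dimensional projection exactly as you anticipate), and identify the limiting Brownian bridge through the eigenexpansion of $LC_X$. The only cosmetic difference is that the paper packages the invariance principle as a Skorokhod--Dudley--Wichura coupling (sup-distance to a Gaussian process tending to zero in probability) rather than as weak convergence in $D([0,1],L^2)$ followed by the continuous mapping theorem; the two formulations are interchangeable here.
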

\begin{remark}
{As a reminder, $\{\rho_d\colon d\ge1\}$ are the eigenvalues of $LC_X(t,s,\tilde{t},\tilde{s})$.}
\end{remark}
\begin{theorem}
\label{th2}
Under Assumption \ref{a1} and $H_a$, 
$T_N(\hat{\theta}^*_N)\to\infty,~\mbox{as}\ N\to\infty.$
\end{theorem}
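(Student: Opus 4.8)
The plan is to exploit the fact that $T_N(\widehat{\theta}_N)$ is a maximum over $\theta$, so it is bounded below by the value of $T_N$ at the true change fraction. Assuming the change point satisfies $k^*=[N\theta^*]$ for a fixed $\theta^*\in(0,1)$, we have $T_N(\widehat{\theta}_N)\ge T_N(\theta^*)$, so it suffices to prove $T_N(\theta^*)\to\infty$ in probability. Write $D(t,s)=C^{(1)}(t,s)-C^{(2)}(t,s)$, which under $H_a$ is a fixed nonzero element of $L^2(\mathcal{T}\times\mathcal{T})$, so that $\|D\|>0$.

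First I would isolate the deterministic signal carried by $\Delta_{k^*}$. A direct computation of the expectation, using $E\{X_i\}=C^{(1)}$ for $i\le k^*$ and $E\{X_i\}=C^{(2)}$ for $i>k^*$, gives
$$E\{\Delta_{k^*}(t,s)\}=\frac{k^*(N-k^*)}{N}\,D(t,s).$$
Since $k^*/N\to\theta^*\in(0,1)$, the factor $k^*(N-k^*)/N$ is of exact order $N$, hence $\|E\{\Delta_{k^*}\}\|\sim N\,\theta^*(1-\theta^*)\|D\|$, which diverges linearly in $N$.

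Next I would control the stochastic fluctuation $R_{k^*}=\Delta_{k^*}-E\{\Delta_{k^*}\}$, a centred partial-sum functional. Within each regime $\{i\le k^*\}$ and $\{i>k^*\}$ the sequence is strictly stationary and $L^p$-$m$-approximable by Assumptions (A2)--(A3), so the Hilbert-space invariance principle underlying Theorem~\ref{th1} applies to the centred sums on each segment, yielding $\|R_{k^*}\|=O_P(\sqrt{N})$. Combining the two bounds through the reverse triangle inequality,
$$\|\Delta_{k^*}\|\ge\|E\{\Delta_{k^*}\}\|-\|R_{k^*}\|=N\,\theta^*(1-\theta^*)\|D\|-O_P(\sqrt{N}),$$
so $\|\Delta_{k^*}\|/N$ is bounded below in probability by a positive constant. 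Squaring and dividing by $N$ gives
$$T_N(\theta^*)=\frac{1}{N}\|\Delta_{k^*}\|^2\ge\frac{1}{N}\bigl(N\,\theta^*(1-\theta^*)\|D\|-O_P(\sqrt{N})\bigr)^2\to\infty,$$
which by the maximality of $T_N(\widehat{\theta}_N)$ completes the argument.

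The main obstacle is the fluctuation bound $\|R_{k^*}\|=O_P(\sqrt{N})$. Under $H_0$ this is exactly the content of the functional central limit theorem used to prove Theorem~\ref{th1}, but under $H_a$ the partial sum mixes two distinct stationary regimes. I would therefore split $R_{k^*}$ at $k^*$ into a sum over $\{i\le k^*\}$ governed by $f_1$ and a sum over $\{i>k^*\}$ governed by $f_2$, apply the invariance principle to each, and recombine. The delicate point is verifying that the cross-regime coupling terms near the boundary $k^*$ contribute only lower-order error; the $L^p$-$m$-approximability with $p>2$ guarantees summable coupling coefficients, which is precisely what makes these terms negligible relative to the $O(N)$ signal.
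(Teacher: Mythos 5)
Your proposal is correct and follows essentially the same route as the paper: bound $T_N(\widehat{\theta}_N)$ below by $T_N(\theta^*)$, split $\Delta_{k^*}$ into a deterministic drift of order $N$ proportional to $C^{(1)}-C^{(2)}$ plus a centred partial-sum fluctuation that is $O_P(\sqrt{N})$ by the invariance principle behind Lemma~\ref{le1}, and conclude divergence. Your explicit use of the reverse triangle inequality and your remark on splitting the centred sum at $k^*$ into the two stationary regimes are slightly more careful renderings of steps the paper states tersely, but the argument is the same.
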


{The null distribution incorporates {infinitely many} unknown eigenvalues $\rho_d$'s. In practice, the plug-in estimator $\sum\limits_{d=1}^{D_\rho}\hat{\rho}_dB^2_{d}(\theta)$ is employed instead. The existing literature does not provide theoretical insights on the influence of the selection of $D_\rho$. The selection of $D_\rho$ should trade off the balance between bias and variance. A large value of $D_\rho$ leads to small bias, but large estimation variance. We provide some theoretical insights on the selection of  $D_\rho$ in Section~\ref{s3.2}. The practical procedure of estimating $\{\rho_d\colon d\in\mathbb{N}\}$ is discussed in Section~\ref{s3.3}.}

\subsection{Selection of $\rho_d$'s}
\label{s3.2}
{One key step of the testing procedure is estimating the unknown eigenvalues of the long-run covariance function $LC_X(t,s,\tilde{t},\tilde{s})$. 
Under $H_0$, the (auto-)covariance of $\{X_i(t,s)\colon i\in\mathbb{N}\}$ is defined in Section~\ref{s2}. Under $H_a$, the (auto-)covariance is defined as 
$$\mathcal{C}_{X,h}(t,s,\tilde{t},\tilde{s})=\theta\mathcal{\mathcal{C}}^{(1)}_{X,h}(t,s,\tilde{t},\tilde{s})+(1-\theta)\mathcal{\mathcal{C}}^{(2)}_{X,h}(t,s,\tilde{t},\tilde{s}),$$
where, as $h\ge0$,
\begin{align*}
\mathcal{\mathcal{C}}^{(1)}_{X,h}(t,s,\tilde{t},\tilde{s})&=E\{(X_{i-h}(t,s)-C^{(1)}(t,s))(X_{i}(\tilde{t},\tilde{s})-C^{(1)}(\tilde{t},\tilde{s}))\},\qquad i\le k^*,\\
\mathcal{\mathcal{C}}^{(2)}_{X,h}(t,s,\tilde{t},\tilde{s})&=E\{(X_{i}(t,s)-C^{(2)}(t,s))(X_{i+h}(\tilde{t},\tilde{s})-C^{(2)}(\tilde{t},\tilde{s}))\},\qquad i> k^*,
\end{align*} 
and the case as $h<0$ can be defined similarly. Additionally, for the four-way function $\mathcal{C}(t,s,\tilde{t},\tilde{s})$, let
$${\vertiii{\mathcal{C}(t,s,\tilde{t},\tilde{s})}_\mathcal{S}^2=\int\int \mathcal{C}^2(t,s,\tilde{t},\tilde{s}) dtdsd\tilde{t}d\tilde{s}.}$$ 
As the long-run covariance consists of {infinitely many} lagged auto-covariance functions, we consider the kernel estimator of $LC_X(t,s,\tilde{t},\tilde{s})$, defined as
$$\widehat{LC}_X(t,s,\tilde{t},\tilde{s})=\sum_{h=-(N-1)}^{N-1}W\left(\frac{h}{\ell}\right)\widehat{\mathcal{C}}_{X,h}(t,s,\tilde{t},\tilde{s}),$$
where $W(\cdot)$ is a symmetric weight function satisfying the following assumptions.
\begin{assumption}
\label{a4}
$W(0)=1$, $0\le W(\cdot)\le1$, $W(u)=W(-u)$, $W(u)=0$ if $|u|>1$, and the bandwidth $\ell$ satisfies $\ell=N^{1/\kappa}$, where $\kappa>2$. 
\end{assumption}
\begin{assumption}
\label{a5}
There exist $\alpha>0,\beta>1$ and $c_0,c_1>0$, so that $c^{-1}_0h^{-\beta}\le\vertiii{\mathcal{C}_{X,h}}_\mathcal{S}\le c_0h^{-\beta}$, and $c^{-1}_1|u|^\alpha \le 1-W(u)\le c_1|u|^{\alpha}$.
\end{assumption}}
\begin{remark}
$\beta>1$ assures $LC_X(t,s,\tilde{t},\tilde{s})<\infty$, and $\alpha>0$ assures that the influence of auto-covariance decays as lag increases. As a special case, when the functions are independent, $W(0)=1$, and $W(u)=0,u\ne0$.
\end{remark}
 We estimate the  covariance and auto-covariance functions with the entire sequence as follows,
\begin{align*}
\widehat{\mathcal{C}}_{X,h}(t,s,\tilde{t},\tilde{s})&=\frac{1}{N-h}\sum_{i=1}^{N-h}\{X_i(t,s)-\bar{X}^*_i(t,s)\}\{X_{i+h}(\tilde{t},\tilde{s})-\bar{X}_{i+h}^*(\tilde{t},\tilde{s})\},\qquad h\ge0,\\
\widehat{\mathcal{C}}_{X,h}(t,s,\tilde{t},\tilde{s})&=\frac{1}{N+h}\sum_{i=1-h}^{N}\{X_i(t,s)-\bar{X}^*_i(t,s)\}\{X_{i+h}(\tilde{t},\tilde{s})-\bar{X}_{i+h}^*(\tilde{t},\tilde{s})\},\qquad h<0,
\end{align*}
where
\begin{equation*}
\bar{X}^*_i(t,s)=\left\{
\begin{aligned}
&\frac{1}{\hat{k}^*_N}\sum_{j=1}^{\hat{k}_N^*}X_j(t,s),\qquad 1\le i\le\hat{k}^*_N,\\
&\frac{1}{N-\hat{k}^*_N}\sum_{j=\hat{k}_N^*+1}^{N}X_j(t,s),\qquad \hat{k}^*_N+1\le i\le N,
\end{aligned}
\right.
\end{equation*}
and $\hat{k}^*_N=[N\hat{\theta}_N^*]$. As a side note, it can be shown that $\widehat{\mathcal{C}}_{X,h}(t,s,\tilde{t},\tilde{s})$ is an asymptotically unbiased estimator of ${\mathcal{C}}_{X,h}(t,s,\tilde{t},\tilde{s})$ under Assumption~\ref{a4}.
The estimated eigenvalues $\{\hat{\rho}_d\colon d\in\mathbb{N}\}$ are obtained by solving the following equation
\begin{equation*}
\int \widehat{LC}_X(t,s,\tilde{t},\tilde{s})\hat{\psi}_d(\tilde{t},\tilde{s})\tilde{t}d\tilde{s}=\hat{\rho}_d\hat{\psi}_d(t,s),
\end{equation*}
The following theorem, which holds under both $H_0$ and $H_a$, presents the convergence rate of the estimated long-run covariance $\widehat{LC}_X(t,s,\tilde{t},\tilde{s})$.
{
\begin{theorem}
\label{th6}
Under Assumption \ref{a1}---\ref{a5},  with arbitrary small $\epsilon>0$
\begin{align*}
\vertiii{\widehat{LC}_{X}-LC_X}_\mathcal{S}&\le O_p(1)N^{\max\{1/\kappa-1/2,-(\beta-1)/\kappa\}}\\
&\vee\left\{
\begin{array}{rcl}
N^{-(\beta-1)/\kappa}, & & \mbox{if}\ {\alpha-\beta>-1}.\\
N^{-(\beta-1)/\kappa+\epsilon},& & \mbox{if}\ {\alpha-\beta=-1}.\\
N^{-\alpha/\kappa}, & & \mbox{if}\ {\alpha-\beta<-1}.
\end{array} \right.
\end{align*}
\end{theorem}
By Corollary 1.6 in Gohberg et al.~(1990), $|{\hat{\rho}_d-\rho_d}|\le\vertiii{\widehat{LC}_{X}-LC_X}_\mathcal{S}$, then the  corollary below follows.
\begin{corollary}
\label{cr}
Under Assumption \ref{a1}---\ref{a5}, with arbitrary small $\epsilon>0$
\begin{align*}
\max\limits_{d\ge1}|{\hat{\rho}_d-\rho_d}|&\le O_p(1)N^{\max\{1/\kappa-1/2,-(\beta-1)/\kappa\}}\\
&\vee\left\{
\begin{array}{rcl}
N^{-(\beta-1)/\kappa}, & & \mbox{if}\ {\alpha-\beta>-1}.\\
N^{-(\beta-1)/\kappa+\epsilon},& & \mbox{if}\ {\alpha-\beta=-1}.\\
N^{-\alpha/\kappa}, & & \mbox{if}\ {\alpha-\beta<-1}.
\end{array} \right.
\end{align*}
\end{corollary}
\begin{remark}
As the functions are independent, only the covariance function is incorporated. By Theorem 3.1 in H\"ormann and Kokoszka (2010), in this special case, $\vertiii{\widehat{LC}_{X}-LC_X}_\mathcal{S}=O_p(N^{-1/2})$, and consequently, $\max\limits_{d\ge1}|{\hat{\rho}_d-\rho_d}|=O_p(N^{-1/2})$. 
\end{remark}

Note that, in practice, only finite number of eigenvalues $\rho_d$'s are estimated and incorporated, say, the truncated estimated null distribution $\sum_{d=1}^{D_\rho}\hat{\rho}_dB_d(\theta)$ is used to find the critical value. {Now we show how the selection of $D_\rho$ influences the estimation of the null distribution.}
Assuming ${D_\rho}$ eigenvalues are included and letting $\|\cdot\|$ signify the $l^2$-norm, it follows that
\begin{align*}
\left\|\sum_{d=1}^\infty\rho_dB^2_{d}(\theta)-\sum_{d=1}^{D_\rho}\hat{\rho}_dB^2_{d}(\theta)\right\|&\le\bigg\|\sum_{d=1}^{D_\rho}(\rho_d-\hat{\rho}_d){B}^2_{d}(\theta)\bigg\|+\bigg\|\sum_{d={D_\rho}+1}^\infty\rho_d{B}^2_{d}(\theta)\bigg\|.
\end{align*}
Suppose that there exists a constant $C$ and $\gamma>1$, such that $C^{-1}d^{-\gamma}\le\rho_d\le Cd^{-\gamma}$, then by the {triangle} inequality, 
\begin{align*}
\bigg\|\sum_{d=1}^{D_\rho}(\rho_d-\hat{\rho}_d){B}^2_{d}(\theta)\bigg\|&\le\sum_{d=1}^{D_\rho} |\rho_d-\hat{\rho}_d| \|{B}^2_{d}(\theta)\|\le O_p(1){D_\rho}N^{-\eta},\\
\bigg\|\sum_{d={D_\rho}+1}^\infty\rho_d{B}^2_{d}(\theta)\bigg\|&\le\sum_{d={D_\rho}+1}^\infty\rho_d\|{B}^2_{d}(\theta)\|\le O_p(1){D_\rho}^{-\gamma+1},
\end{align*}
{where $\eta$ is given in Corollary~\ref{cr} through $\alpha,\beta,\kappa$.} Assuming $D_\rho=N^{x}$ and $0<x<\eta$ assuring convergence, we derive the convergence rate of the approximation error as follows
$$\left\|\sum\limits_{d=1}^\infty\rho_dB^2_{d}(\theta)-\sum\limits_{d=1}^{D_\rho}\hat{\rho}_dB^2_{d}(\theta)\right\|\le O_p(1)\max\{N^{x-\eta},N^{-(\gamma-1)x}\}.$$ 

\begin{remark}
{When the functions consist of oscillations over a wide range of frequency (e.g., brain signals), the eigenvalue $\rho_d$ typically decays slow and a large amount of eigenvalues need to be incorporated. As the sample size is small, this could lead to overly large estimation error and reduced detection power. To solve this problem, we propose to filter the functions into different frequency bands and detect the change points in different bands separately.}  
\end{remark}

\subsection{Estimation of $\rho_d$'s}
\label{s3.3}
{Ramsay and Silverman (2004) developed a dimension reduction approach for the estimation of eigen-elements of binary covariance functions,} and we adjust and extend the procedure to estimate $\{\rho_d\colon d\in\mathbb{N}\}$.
We propose to represent $\{X_i(t,s)\colon i\in\mathbb{N}\}$ by a series of common basis functions. Given $\{\phi_d(t)\colon d\in\mathbb{N}\}$ being the common bases of $L^2(\mathcal{T})$, $\{\phi_d(t)\phi_{d'}(s)\colon d,d'\in\mathbb{N}\}$ are then the common bases of $L^2(\mathcal{T}\times\mathcal{T})$. If a function $f(t,s)$ is a symmetric function, we have $$\llangle f(t,s),\phi_d(t)\phi_{d'}(s)\rrangle=\llangle f(t,s),\phi_{d'}(t)\phi_d(s)\rrangle$$ for any pair of $d,d'$, therefore we can construct the following bases for bivariate symmetric functions,
\begin{align*}
\{\Phi_d(t,s)\colon d\in\mathbb{N}\}&\\
&\hspace{-3cm}=\left\{\phi_{d_1}(t)\phi_{d_2}(s)+\phi_{d_2}(t)\phi_{d_1}(s)\colon d_1\ne d_2\in\mathbb{N}_+\right\}\cup\left\{\phi_{d'}(t)\phi_{d'}(s)\colon d'\in\mathbb{N}_+\right\},
\end{align*}
Suppose the function $\widehat{Z}_i(t,s)=X_i(t,s)-\bar{X}_i^*(t,s)$ has the following basis approximation
\begin{equation*}
\label{b1}
\widehat{Z}_{i,J}(t,s)=\sum_{j=1}^{J}c_{ij}\Phi_{j}(t,s).
\end{equation*}
It is assumed that $J$ is selected such that the above $J$-dimensional approximation is close to the original functions. Specifically, for some tolerance error of approximation $e$, $J$ is the smallest number satisfying that $\sum_i\vertiii{\widehat{Z}_{i,J}-\widehat{Z}_i}^2/N$ is less than $e$. 
Define ${\Xi}_i=(c_{i,1},\ldots,c_{i,J})'$, $\bm{C}_{a:b}=({\Xi}_a,\ldots,{\Xi}_b)'$ where $(a<b)$, and $\bm{\Phi}(t,s)=(\Phi_1(t,s),\ldots,\Phi_J(t,s))'$. 
We represent $\widehat{LC}_X(t,s,\tilde{t},\tilde{s})$ in the following matrix form
\begin{align*}
\widehat{LC}_X(t,s,\tilde{t},\tilde{s})\approx&\sum_{h\ge0}W\left(\frac{h}{\ell}\right)\frac{1}{N-h}\bm{\Phi}(t,s)'\bm{C}'_{1:(N-h)}\bm{C}_{(h+1):N}\bm{\Phi}(\tilde{t},\tilde{s})\\
&+\sum_{h<0}W\left(\frac{h}{\ell}\right)\frac{1}{N+h}\bm{\Phi}(t,s)'\bm{C}'_{(1-h):N}\bm{C}_{1:(N+h)}\bm{\Phi}(\tilde{t},\tilde{s}).
\end{align*}
Now suppose that $\hat{\psi}(t,s)$ has the following basis approximation
$$\hat{\psi}(t,s)\approx\sum_{j=1}^Jb_{j}\Phi_j(t,s)=\bm{\Phi}(t,s)'{\bm{b}},$$
where $\bm{b}=(b_{1},\ldots,b_{J})'$, and this yields
\begin{align*}
\int\widehat{LC}_X(t,s,\tilde{t},\tilde{s}){\psi}(\tilde{t},\tilde{s})d\tilde{t}d\tilde{s}&\approx\int\bm{\Phi}(t,s)'\bm{\Sigma}_C\bm{\Phi}(\tilde{t},\tilde{s})\bm{\Phi}(\tilde{t},\tilde{s})'\bm{b}d\tilde{t}d\tilde{s}\\
&=\bm{\Phi}(t,s)'\bm{\Sigma}_CG\bm{b}\\
&\approx\rho\bm{\Phi}(t,s)'\bm{b},
\end{align*}
where
$$\bm{\Sigma}_C=\sum_{h\ge0}W\left(\frac{h}{\ell}\right)\frac{1}{N-h}\bm{C}'_{1:(N-h)}\bm{C}_{(h+1):N}+\sum_{h<0}W\left(\frac{h}{\ell}\right)\frac{1}{N+h}\bm{C}'_{(1-h):N}\bm{C}_{1:(N+h)}.$$
where $G$ is a $J\times J$ matrix with elements $G_{ij}=\llangle\Phi_i,\Phi_j\rrangle$. This equation holds for arbitrary $t$ and $s$, thus we have the following approximated eigen-equation
$${G}^{1/2}\bm{\Sigma}_C{G}^{1/2}\bm{u}=\rho \bm{u},$$ 
where $\bm{u}={G}^{1/2}\bm{b}$. We propose to solve this eigen-equation to obtain $\{\hat{\rho}_d\colon d\in\mathbb{N}\}$. 
\subsection{Asymptotic properties of the estimated change point}
We now develop the asymptotic properties of the estimated change point. Denote $k^*=[N\theta^*]$, where $\theta^*$ is fixed and unknown. First we shall show $\hat{\theta}^*_N\overset{p}\to\theta^*$. Define $\Delta_c(t,s)=C^{(1)}(t,s)-C^{(2)}(t,s)$, and for $\theta\in[0,1]$,
\begin{equation*}
h(\theta,t,s)=\left\{
\begin{aligned}
\theta(1-\theta^*)\Delta_c(t,s),\qquad 0\le\theta\le\theta^*,\\
(1-\theta)\theta^*\Delta_c(t,s),\qquad \theta^*<\theta\le1,
\end{aligned}
\right.
\end{equation*}

\begin{theorem}
\label{th5}
Under Assumption 1, if $\Delta_c\ne0$, then
$$\sup_{0\le\theta\le1}\left\{N^{-1}T_N(\theta)-\int\int h^2(\theta,t,s)dtds\right\}\overset{p}\to0.$$
\end{theorem}
We demonstrate the consistency of $\hat{\theta}_N^*$ in the following corollary.
\begin{corollary}
\label{co1}
Under Assumption 1, if $\Delta_c\ne0$, then $\hat{\theta}_N^*\overset{p}\to\theta^*$.
\end{corollary}

This corollary can be easily obtained from Theorem~\ref{th5}. The estimated scaled change point is the maximizer of $T_N(\theta)/N$. Evidently, the unique maximizer of $\int\int h^2(\theta,t,s)dtds$ is $\theta^*$, thus $\hat{\theta}^*_N\overset{p}\to\theta^*$. 

To discuss the asymptotic distribution of the estimated unscaled change point, we define
\begin{equation*}
U(k)=\left\{
\begin{aligned}
\left\{(1-\theta^*)\vertiii{\Delta_c}^2+\frac{1}{k}\sum_{i=k^*+k}^{k^*}\llangle \Delta_c,X_i-C^{(1)}\rrangle\right\}k,\qquad k<0,\\
0,\qquad k=0,\\
\left\{-\theta^*\vertiii{\Delta_c}^2-\frac{1}{k}\sum_{i=k^*+1}^{k^*+k}\llangle \Delta_c,X_i-C^{(2)}\rrangle\right\}k,\qquad k>0,
\end{aligned}
\right.
\end{equation*}
The difference between the estimated unscaled change point $\hat{k}_N^*$ and the true unscaled change point $k^*$ asymptotically converges to the smallest maximizer of the function $U(k)$ in distribution, which is illustrated in Theorem~\ref{th3}.

\begin{theorem}
\label{th3}
Under Assumption 1 and {the assumption that the distribution of $Y_i(t)$ is continuous over $t\in\mathcal{T}$}, if $\Delta_c\ne 0$, then
$$\hat{k}^*_N-k^*\overset{\mathcal{D}}\to\min\left\{k\colon U(k)=\sup_{\tilde{k}\in\mathbb{N}}U(\tilde{k})\right\},\qquad \text{as}\ N\to\infty.$$
\end{theorem}

\begin{remark}
{Under the assumption of continuous distribution, the probability that $U(k)$ has more than one maximizer is zero.}
The asymptotic distribution of $\hat{k}^*_N$ is influenced by two factors: (1.)~discrepancy $C^{(1)}(t,s)-C^{(2)}(t,s)$, and (2.)~variation of $\{X_i(t,s)\colon i\in\mathbb{N}\}$ and the alignment between $X_i(t,s)-E\{X_i(t,s)\}$ and $\Delta_c(t,s)$. As a special case, if $X_i(t,s)-E\{X_i(t,s)\}$ is orthogonal to $\Delta_c(t,s)$ for any $i$, then the unscaled estimated change point is always consistent with the true one. 
\end{remark}

\section{{Simulations}}
\label{s4}
\subsection{Settings}
To study the finite sample behaviors of the change point estimator, we simulated two groups of functions with the same sample size and different covariance functions. The two groups of functions were concatenated as a functional sequence with structural break in the mid-point. The functional sequences were simulated either from an i.i.d.~process or a FAR(1) process. We selected the 2-nd to the 9-th Fourier basis over the unit interval $[0,1]$, denoted by $\nu_1(t),\ldots,\nu_{8}(t)$, to generate the functions. In other words, we simulated functions in the $\delta$-frequency band (1-4 Hertz). The curves were then generated by the basis expansion
$$Y^{(g)}_i(t)=\sum_{d=1}^{8}\xi^{(g)}_{i,d}\nu_d(t)+e_i(t),\qquad g=1,2.$$
If the generating process is an i.i.d.~process, $\{\xi^{(g)}_{i,d}\colon d=1,\ldots,8\}$ are independent normal random variables with standard deviation ${\bm\sigma}_1$ and ${\bm\sigma}_2$ for group 1 and group 2 respectively, and if the generating process is a FAR(1) process, $\{\xi^{(g)}_{i,d}\colon d=1,\ldots,8\}$ satisfies the recursive equation $\xi^{(g)}_{i,d}=0.5\xi^{(g)}_{i-1,d}+\epsilon^{(g)}_{i,d}$, where $\{\epsilon^{(g)}_{i,d}\colon d=1,\ldots,8\}$ are independent normal random variables with standard deviation ${\bm\sigma}_1$ and ${\bm\sigma}_2$ for group 1 and group 2 respectively. The bandwidth is $N^{1/4}$ for the dependent case, and 1 for the i.i.d.~case, and $W(u)=1$, $|u|\le1$. 

Denote $\bm{1}_p$ to be the $p$-dimensional row vector with all elements being $1$. We considered three different settings. {When there is no change point, 
\begin{itemize}
\item Setting 1: ${\bm\sigma}_1={\bm\sigma}_2=(\bm{1}_4,0.5\bm{1}_4)$;
\item Setting 2: ${\bm\sigma}_1={\bm\sigma}_2=(\bm{1}_2,0.5\bm{1}_2,\bm{1}_2,0.5\bm{1}_2)$;
\item Setting 3: ${\bm\sigma}_1={\bm\sigma}_2=(1,0.5)\otimes\bm{1}_4$,
\end{itemize}}
and when there is a change point in the middle,
\begin{itemize}
\item Setting 1: ${\bm\sigma}_1=(\bm{1}_4,0.5\bm{1}_4)$, ${\bm\sigma}_2=(0.5\bm{1}_4,\bm{1}_4)$;
\item Setting 2: ${\bm\sigma}_1=(\bm{1}_2,0.5\bm{1}_2,\bm{1}_2,0.5\bm{1}_2)$, ${\bm\sigma}_2=(0.5\bm{1}_2,\bm{1}_2,0.5\bm{1}_2,\bm{1}_2)$;
\item Setting 3: ${\bm\sigma}_1=(1,0.5)\otimes\bm{1}_4$, ${\bm\sigma}_2=(0.5,1)\otimes\bm{1}_4$.
\end{itemize}
When a change point exists, in the first two settings, the discrepancy between the covariance functions comes from the difference of spectral distribution, and functions in group 1 contain lower frequency oscillations. In Setting 3, the two groups have the same spectrum but different phase distribution. $\{e_i(t)\colon i\in\mathbb{N}\}$ are i.i.d.~random error functions satisfying 
$$e_i(t)=\sum_{d=1}^{8}\xi^{(e)}_{i,d}\nu_d(t),$$ 
where $\{\xi^{(e)}_{i,d}\colon d=1,\ldots,8\}$ are independent normal random variables with mean zero and standard deviation ${\bm\sigma}_e=\{\sigma/d\colon d=1,\ldots,8\}$. We took into account the influence of random error on the detection performance by setting different values to $\sigma$, say, $\sigma=3,4,5,6$.  A large value of $\sigma$ indicates low signal-noise ratio. For each setting, we simulated $150$ or $300$ curves for each group. The simulation runs were repeated {1000} times for different values of $\sigma$.
 
\subsection{Size and power} 
 
{Under the settings without a change point, we calculated the empirical size at the nominal level $\alpha=0.05$.
Under the settings with a change point, we obtained the empirical power at level $\alpha=0.05$. We used the R package ``$sde$'' to obtain the numerical 95\% quantile of the null distribution. Table~\ref{t1}---\ref{t4} display the empirical sizes and powers under different settings, and Figure~\ref{boxiid} and \ref{boxfar} displays the Box-plots of $\hat{\theta}^*_N$. To estimate the non-pivotal null distribution, the number of eigenvalues were selected to be $N^x$, where $x$ takes value in $0.28, 0.36, 0.42, 0.5$, corresponding to $d_1,d_2,d_3,d_4$. 

\begin{table}[H]
	\centering
	\caption{Empirical sizes under different settings and dimensions (i.i.d.)}{
	\begin{tabular}{c p{0.2in}|p{0.28in}p{0.28in}p{0.28in}p{0.28in}|p{0.28in}p{0.28in}p{0.28in}p{0.28in}|p{0.28in}p{0.28in}p{0.28in}p{0.28in}}
	\hline 
         $\sigma$  & $N$ & \multicolumn{4}{c|}{\tabincell{c}{Setting 1\\$d_1$\hspace{0.57cm} $d_2$\hspace{0.57cm}  $d_3$ \hspace{0.57cm} $d_4$}} &\multicolumn{4}{c|}{\tabincell{c}{Setting 2\\$d_1$\hspace{0.57cm} $d_2$\hspace{0.57cm}  $d_3$ \hspace{0.57cm} $d_4$}}  & \multicolumn{4}{c}{\tabincell{c}{Setting 3\\$d_1$\hspace{0.57cm} $d_2$\hspace{0.57cm}  $d_3$ \hspace{0.57cm} $d_4$}} \\
         \hline 
         $3$ &150&0.12& 0.09& 0.06& 0.06&0.10& 0.09& 0.08& 0.07&0.10& 0.10& 0.08& 0.06\\
                &300&0.09& 0.08& 0.06& 0.05&0.08& 0.08& 0.06& 0.04&0.10& 0.08& 0.06& 0.05\\
                \hline 
         $4$ &150&0.09& 0.07& 0.07& 0.06&0.10& 0.10& 0.08& 0.05&0.10& 0.06& 0.08& 0.06 \\
	        &300&0.08& 0.06& 0.07& 0.06&0.07& 0.07& 0.05& 0.05&0.09& 0.06& 0.05& 0.05\\
	        \hline 
         $5$  &150&0.07& 0.08& 0.08& 0.06&0.08& 0.08& 0.07& 0.04&0.07& 0.08& 0.08& 0.07\\
	        &300&0.07& 0.07 &0.06 &0.06&0.08 &0.07 &0.06 &0.06&0.10 &0.08 &0.06 &0.07\\
	        \hline 
	 $6$  &150&0.09& 0.08& 0.06& 0.04&0.08 &0.07& 0.06 &0.06&0.09& 0.09 &0.05& 0.06\\
	        &300&0.07 &0.05& 0.06 &0.04&0.07& 0.07 &0.07& 0.06&0.06 &0.06& 0.05 &0.06\\
	 \hline 
	\end{tabular}
	}
	\label{t1}
\end{table}

\begin{table}[H]
	\centering
	\caption{Empirical powers under different settings and dimensions (i.i.d.)}{
	\begin{tabular}{c p{0.2in}|p{0.28in}p{0.28in}p{0.28in}p{0.28in}|p{0.28in}p{0.28in}p{0.28in}p{0.28in}|p{0.28in}p{0.28in}p{0.28in}p{0.28in}}
	\hline 
         $\sigma$  & $N$ & \multicolumn{4}{c|}{\tabincell{c}{Setting 1\\$d_1$\hspace{0.57cm} $d_2$\hspace{0.57cm}  $d_3$ \hspace{0.57cm} $d_4$}} &\multicolumn{4}{c|}{\tabincell{c}{Setting 2\\$d_1$\hspace{0.57cm} $d_2$\hspace{0.57cm}  $d_3$ \hspace{0.57cm} $d_4$}}  & \multicolumn{4}{c}{\tabincell{c}{Setting 3\\$d_1$\hspace{0.57cm} $d_2$\hspace{0.57cm}  $d_3$ \hspace{0.57cm} $d_4$}} \\
         \hline 
         $3$ &150&1.00& 1.00& 1.00& 1.00&1.00& 1.00& 1.00& 1.00&1.00& 1.00& 1.00& 1.00\\
                &300&1.00& 1.00& 1.00& 1.00&1.00& 1.00& 1.00& 1.00&1.00& 1.00& 1.00& 1.00\\
                \hline 
         $4$ &150&0.93& 0.91& 0.89& 0.83&0.93& 0.90& 0.89& 0.87&0.93& 0.91& 0.91& 0.86 \\
	        &300&1.00& 1.00& 1.00& 1.00&1.00& 1.00& 1.00& 1.00&1.00& 1.00& 1.00& 1.00\\
	        \hline 
         $5$  &150&0.64& 0.63& 0.60& 0.57&0.66& 0.63& 0.56& 0.56&0.68& 0.62& 0.60& 0.56\\
	        &300&1.00& 0.99& 0.99& 0.99&1.00& 1.00& 0.99 &0.98&1.00 &0.99& 0.99 &0.99\\
	        \hline 
	 $6$  &150&0.42& 0.36& 0.36& 0.35&0.43& 0.40& 0.38& 0.28&0.42& 0.37 &0.38& 0.34\\
	        &300&0.92 &0.87 &0.82 &0.82&0.92 &0.87 &0.85 &0.81&0.92 &0.86& 0.82 &0.81\\
	 \hline 
	\end{tabular}
	}
	\label{t2}
\end{table}

\begin{table}[H]
	\centering
	\caption{Empirical sizes under different settings and dimensions (FAR(1))}{
	\begin{tabular}{c p{0.2in}|p{0.28in}p{0.28in}p{0.28in}p{0.28in}|p{0.28in}p{0.28in}p{0.28in}p{0.28in}|p{0.28in}p{0.28in}p{0.28in}p{0.28in}}
	\hline 
         $\sigma$  & $N$ & \multicolumn{4}{c|}{\tabincell{c}{Setting 1\\$d_1$\hspace{0.57cm} $d_2$\hspace{0.57cm}  $d_3$ \hspace{0.57cm} $d_4$}} &\multicolumn{4}{c|}{\tabincell{c}{Setting 2\\$d_1$\hspace{0.57cm} $d_2$\hspace{0.57cm}  $d_3$ \hspace{0.57cm} $d_4$}}  & \multicolumn{4}{c}{\tabincell{c}{Setting 3\\$d_1$\hspace{0.57cm} $d_2$\hspace{0.57cm}  $d_3$ \hspace{0.57cm} $d_4$}} \\
         \hline 
         $3$ &150&0.16& 0.11& 0.08& 0.05&0.15& 0.11& 0.06& 0.07&0.17& 0.13 &0.09 &0.07\\
                &300&0.11& 0.07& 0.08 &0.06&0.12& 0.09& 0.05& 0.06&0.15& 0.09& 0.06& 0.06\\
                \hline 
         $4$ &150&0.12& 0.10& 0.08& 0.07&0.13& 0.10& 0.08& 0.07&0.13& 0.10& 0.07& 0.07 \\
	        &300&0.11 &0.08& 0.05& 0.06&0.12& 0.08& 0.06& 0.05&0.14 &0.07& 0.07& 0.06\\
	        \hline 
         $5$  &150&0.11& 0.11& 0.07& 0.07&0.14& 0.10& 0.08& 0.07&0.12& 0.10& 0.09& 0.08\\
	        &300&0.12& 0.07& 0.08& 0.07&0.11& 0.08& 0.06& 0.06&0.11& 0.07& 0.06& 0.06\\
	        \hline 
	 $6$  &150&0.12& 0.11& 0.07& 0.07&0.12& 0.07& 0.08& 0.06&0.11& 0.09& 0.08& 0.06\\
	        &300&0.09& 0.07& 0.07& 0.06&0.11& 0.07& 0.07& 0.06&0.12& 0.08& 0.08& 0.06\\
	 \hline 
	\end{tabular}
	}
	\label{t3}
\end{table}

\begin{table}[H]
	\centering
	\caption{Empirical powers under different settings and dimensions (FAR(1))}{
	\begin{tabular}{c p{0.2in}|p{0.28in}p{0.28in}p{0.28in}p{0.28in}|p{0.28in}p{0.28in}p{0.28in}p{0.28in}|p{0.28in}p{0.28in}p{0.28in}p{0.28in}}
	\hline 
         $\sigma$  & $N$ & \multicolumn{4}{c|}{\tabincell{c}{Setting 1\\$d_1$\hspace{0.57cm} $d_2$\hspace{0.57cm}  $d_3$ \hspace{0.57cm} $d_4$}} &\multicolumn{4}{c|}{\tabincell{c}{Setting 2\\$d_1$\hspace{0.57cm} $d_2$\hspace{0.57cm}  $d_3$ \hspace{0.57cm} $d_4$}}  & \multicolumn{4}{c}{\tabincell{c}{Setting 3\\$d_1$\hspace{0.57cm} $d_2$\hspace{0.57cm}  $d_3$ \hspace{0.57cm} $d_4$}} \\
         \hline 
         $3$ &150&1.00& 1.00& 1.00& 1.00&1.00& 1.00& 1.00& 1.00&1.00& 1.00& 1.00& 1.00\\
                &300&1.00& 1.00& 1.00& 1.00&1.00& 1.00& 1.00& 1.00&1.00& 1.00& 1.00& 1.00\\
                \hline 
         $4$ &150&1.00& 0.98& 0.98& 0.97&0.99 &0.98& 0.98& 0.98&0.99& 0.98& 0.98& 0.97 \\
	        &300&1.00& 1.00& 1.00& 1.00&1.00& 1.00& 1.00& 1.00&1.00& 1.00& 1.00& 1.00\\
	        \hline 
         $5$  &150&0.93& 0.91& 0.86& 0.84&0.94& 0.90& 0.86& 0.80&0.93& 0.90& 0.87& 0.84\\
	        &300&1.00& 1.00& 1.00& 1.00&1.00& 1.00& 1.00 &1.00&1.00& 1.00& 1.00& 1.00\\
	        \hline 
	 $6$  &150&0.81& 0.71& 0.67& 0.59&0.81& 0.73& 0.65& 0.65&0.80& 0.71& 0.65& 0.64\\
	        &300&1.00& 1.00& 0.99& 0.99&1.00& 1.00& 1.00& 0.99&1.00& 1.00& 1.00& 0.99\\
	 \hline 
	\end{tabular}
	}
	\label{t4}
\end{table}
From Table \ref{t1}---\ref{t4}, we found that the empirical size of the proposed method is typically robust to the selection of dimensions as long as a sufficient number of $\rho_d$ is selected, and the powers increase substantially as the sample size (number of curves) increases. Figure \ref{boxiid} and \ref{boxfar} show that the variance of $\hat{\theta}_N^*$ shrinks significantly as $\sigma$ decreases and sample size increases.

There is also one interesting point in Setting 3 that needs to be emphasized. The two important elements in frequency domain analysis are the 
spectra and phase (see Ombao and Pinto, 2021). As testing the structural break in brain signal recordings (e.g., ~EEG, LPF), we can check the spectrum function (see e.g., Schr\"oder and Ombao, 2019). However, under Setting 3, the spectrum function is the same for the entire sequence, and consequently the spectrum-based detection method does not work, but as our functional procedure incorporates intra-curve information, the structural break in phase can also be detected. This is one of the major advantages of our procedure. 
\begin{figure}[H]
\center
\caption{Box-plots of the detected change points (i.i.d.)}
\includegraphics[scale=0.5]{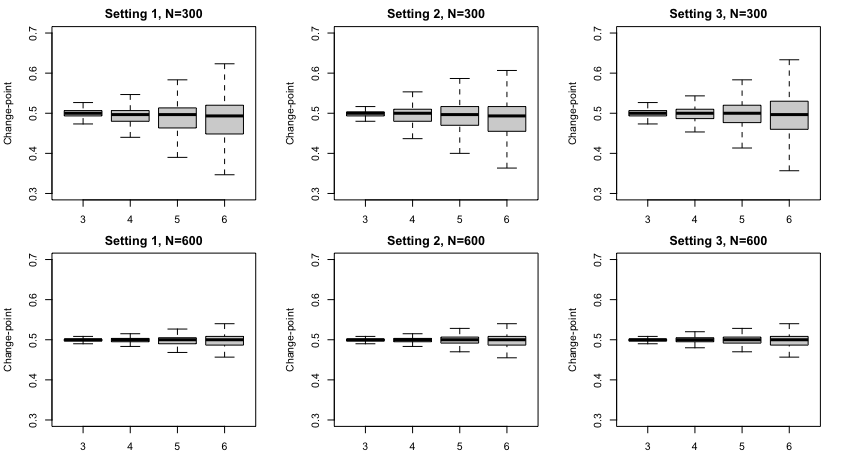}
\label{boxiid}
\end{figure}
\begin{figure}[H]
\center
\caption{Box-plots of the detected change points (FAR(1))}
\includegraphics[scale=0.5]{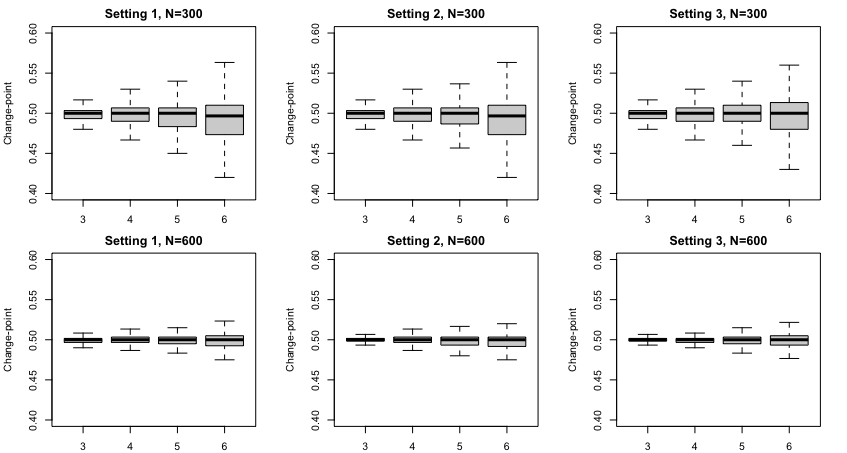}
\label{boxfar}
\end{figure}

\subsection{Comparison with other approaches}
\subsubsection{Bootstrap approach}
Since the null distribution is non-pivotal, Sharipov et al.~(2016) proposed a bootstrap approach to obtain the empirical critical value. In their approach, the entire sequence ($N$ functions) are segmented into $p$ blocks, where each block includes $\lfloor N/p\rfloor$ functions. The blocks are then resampled with replacement to form a new bootstrap sequence. The bootstrap repetition are repeated for multiple times, and the CUSUM are calculated for each repetition (Here, 1000 simulation runs for each setting and 1000 bootstrap repetitions for each simulation run). The critical value are then obtained from the bootstrap CUSUMs. Here we studied the performance of the bootstrap method on i.i.d.~sequences by simulation, and $p$ is set to be $N$. The empirical sizes and powers were obtained at nominal level $\alpha=0.05$. The results are displayed in Table~\ref{bs}. It is shown that the empirical size of the bootstrap approach is marginally lower than the nominal level, and meanwhile the empirical power is lower than the dimension reduction approach, especially under small sample size and large value of $\sigma$. 
\begin{table}[H]
	\centering
	\caption{Empirical sizes and powers under different setting (i.i.d.)}{
	\begin{tabular}{c p{0.2in}|cc|cc|cc}
	\hline 
         $\sigma$  & $N$ & \multicolumn{2}{c|}{\tabincell{c}{Setting 1\\Size  \hspace{0.14cm}  Power}} &\multicolumn{2}{c|}{\tabincell{c}{Setting 2\\Size  \hspace{0.14cm}  Power}}  & \multicolumn{2}{c}{\tabincell{c}{Setting 3\\Size  \hspace{0.14cm}  Power}} \\
         \hline 
         $3$ &150&0.046& 0.989& 0.035& 0.987 &0.033& 0.991\\
                &300&0.051& 1.000&  0.050& 1.000&0.044& 1.000\\
                \hline 
         $4$ &150&0.046& 0.770& 0.036& 0.749&0.039& 0.765 \\
	        &300&0.040& 1.000&  0.052& 1.000&0.037& 1.000\\
	        \hline 
         $5$  &150& 0.037& 0.414& 0.046& 0.429&0.047& 0.428\\
	        &300&0.056& 0.973&0.043& 0.985&0.036& 0.980\\
	        \hline 
	 $6$  &150&0.042& 0.241& 0.050& 0.233&0.047&0.256\\
	        &300&0.053 &0.800 &0.048 &0.791&0.044 &0.786\\
	 \hline 
	\end{tabular}
	}
	\label{bs}
\end{table}

\subsubsection{Weighted CUSUM}
Ordinary CUSUM statistics works well when the change point occurs in the middle of a sequence. However, as the change point is not located in the middle of a sequence, a weighted CUSUM can be considered, say, $$T_{w,N}(\theta)=T_N(\theta)/\{\theta(1-\theta)\}.$$ Since the weight $\theta(1-\theta)$ goes to zero as $\theta$ goes to 0 or 1, we cannot develop the consistency of $\hat{\theta}^*_N$ over the interval $(0,1)$. However, if the change point is bounded away from the boundaries, the consistency of $\hat{\theta}^*_N$ still holds. In the following, we assume there exists some arbitrary small $\epsilon>0$, so that $\theta^*\in[\epsilon,1-\epsilon]$. With the same argument of Theorem \ref{th1}, it can be obtained that under $H_0$,
$$T_{w,N}(\hat{\theta}^*_{N})\overset{\mathcal{D}}\to\sup_{\theta\in[\epsilon,1-\epsilon]}\frac{\sum_{d=1}^\infty\rho_dB^2_{d}(\theta)}{{{{\theta}}(1-{{\theta}})}}, \qquad N\to\infty.$$ 
Here, $\hat{\theta}^*_{N}=\inf\{\theta\colon T_{\omega,N}(\theta)=\sup_{\epsilon\le\theta'\le1-\epsilon}{T_{\omega,N}(\theta')}\}.$
We considered five change points (scaled) $\theta^*=0.1,0.2,0.3,0.4,0.5$,  and simulated 600 functions under different settings. We compared the location of the estimated change points obtained from the unweighted (ordinary) and weighted CUSUM statistics. The box-plots of $\hat{\theta}_N^*$ are displayed in Figure~\ref{boxweight}. 

\begin{figure}[ht]
\center
\caption{Box-plots of the detected change points obtained from ordinary and weighted CUSUM (i.i.d., nominal level $\alpha=0.05$, $N=600$)}
\includegraphics[scale=0.4]{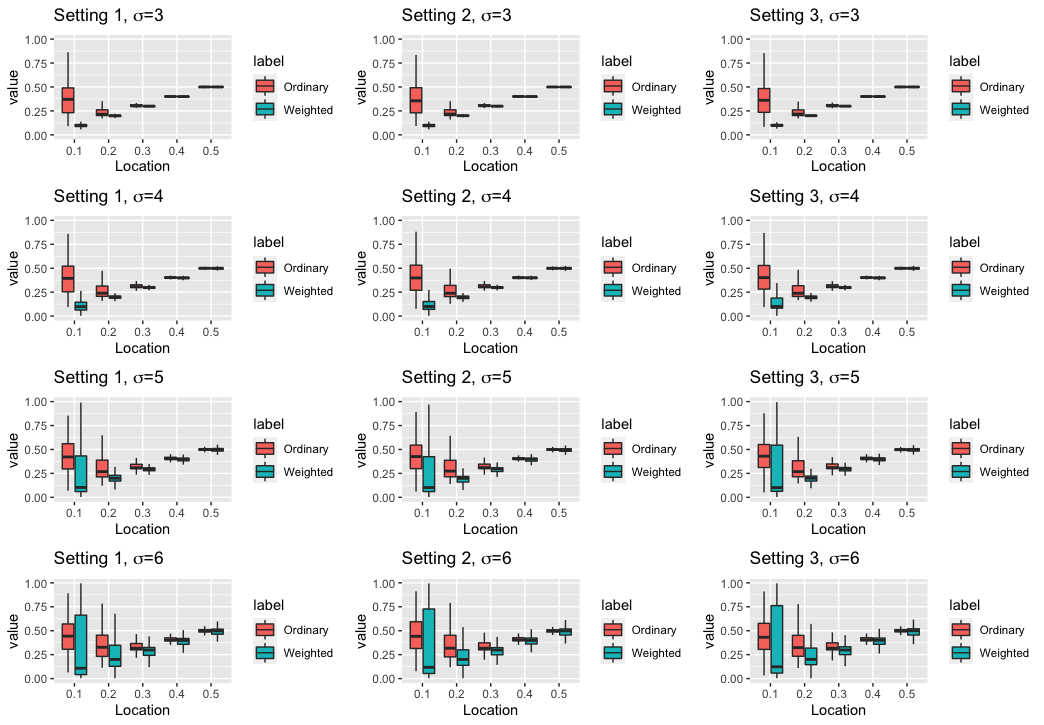}
\label{boxweight}
\end{figure}

It is noted that, as $\theta^*=0.4,0.5$, the ordinary CUSUM estimator is more robust. The superiority becomes pronounced as $\sigma$ increases. In addition, as $\theta^*=0.1$, the variance of the estimated change points by the weighted CUSUM is very large, and the ordinary CUSUM cannot detect the true change point. The reason is that the sample size over $[0,k^*]$ is small (60 curves). To solve this problem, we simulated 2400 curves in each setting to study the influence of sample size, and the box-plots are displayed in Figure \ref{boxweight2}. It is shown that the weighted CUSUM works well as the sample size is large enough, and is superior to ordinary CUSUM when detecting change points near the boundaries.

\begin{figure}[H]
\center
\caption{Box-plots of detected change points obtained from ordinary and weighted CUSUM (i.i.d., nominal level $\alpha=0.05$, $N=2400$)}
\includegraphics[scale=0.4]{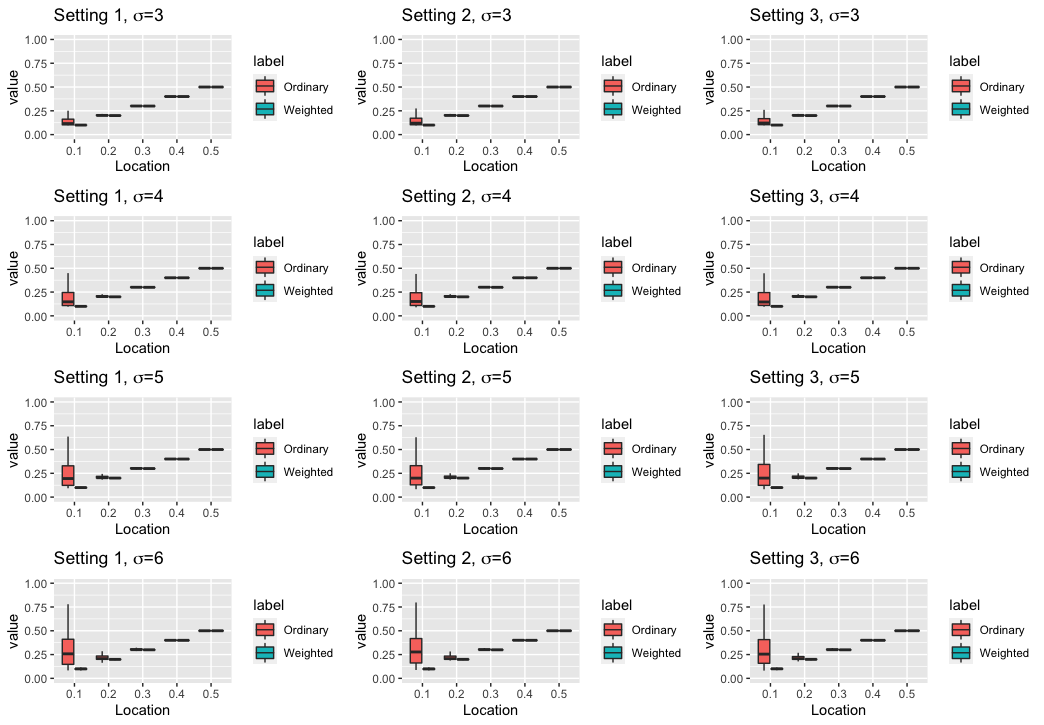}
\label{boxweight2}
\end{figure}

We summarize the findings as follows.
\begin{itemize}
\item[(1.)] When the true change point locates near the middle of a sequence, the unweighted CUSUM, $T_N(\theta)$, produces more robust detection results. 
\item[(2.)] When the true change point locates near the boundary of a sequence, the weighted CUSUM shows superiority over the unweighted CUSUM. Due to the chance variation near the boundaries, the variance of the estimated change point obtained from the weighted CUSUM can be overly large. One way to solve this problem is to increase the sample size to attenuate the effect of chance variation near the boundaries. 
\end{itemize}
}
\section{{Application to rat local field potentials}}
\label{s5}
The new method was applied to local field potential (LPF) trajectories of rat brain activity, collected from a stroke experiment reported in (Wann, 2017). Micro-tetrodes were inserted in 32 locations on the rat cortex from which LFPs were recorded at the rate of 1000 observations per second (Figure \ref{display}). 

\begin{figure}[H]
\center
\caption{Display of 32 micro-tetrodes}
\includegraphics[scale=0.3]{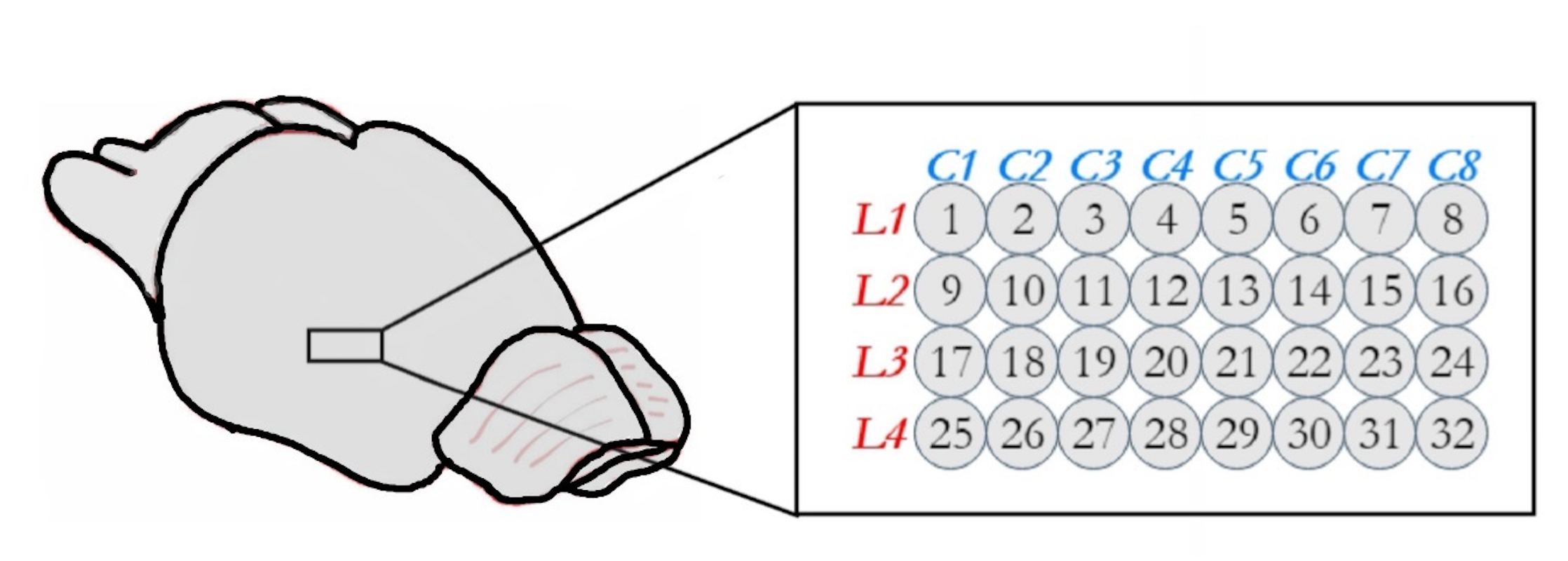}
\label{display}
\end{figure}

In our analysis, observations collected in one second is considered as an epoch. The data at hand  consists of 10 minute of recordings which leads to a total of 600 epochs. Midway in this period (at epoch 300), stroke was mechanically induced on the rat by clamping the medial cerebral artery. 
Here we considered the $\delta$-frequency band (0.5-4 Hertz), and smoothed the trajectory of each epoch with the first 9 Fourier bases specified as follows
\begin{equation*}
F_i(t)=\left\{
\begin{array}{cccl}
1, &t\in[0,1],  & & \text{if}\ i=1,\\
\sqrt{{2}}\cos(2\pi {k}t), & t\in[0,1], && \text{if}\ i=2k,\\
\sqrt{{2}}\sin(2\pi {k}t), & t\in[0,1],&&  \text{if}\ i=2k+1,
\end{array} \right. 
\end{equation*}
where $k=1,2,3,4$. If other frequency bands are of interest, the raw epoch trajectories can be smoothed with the Fourier bases in the corresponding frequency band. 

Irregular extremely large fluctuations may be observed after the occlusion of brain artery. To stabilize the variance of recordings, we applied the cubic root transformation on the LFP values, and outlier epochs were removed from each tetrode. Here, the outlier epochs for each tetrode are defined as those whose norm is beyond the interval $[Q_1 - 1.5\times \mbox{IQR},Q_3 + 1.5\times \mbox{IQR}]$, where $\mbox{IQR}=Q_3-Q_1$ and $Q_1, Q_3$ are the first and third quantile of the $l_2$-norm of the epoch trajectories. The pre-processed LFPs of the 32 tetrodes are displayed in Figure \ref{LFP1}, where the vertical dotted line marks the time of occlusion.

\begin{figure}[ht]
\center
\caption{Pre-processed LFPs of 32 micro-tetrodes ($\delta$-band)}
\includegraphics[scale=0.35]{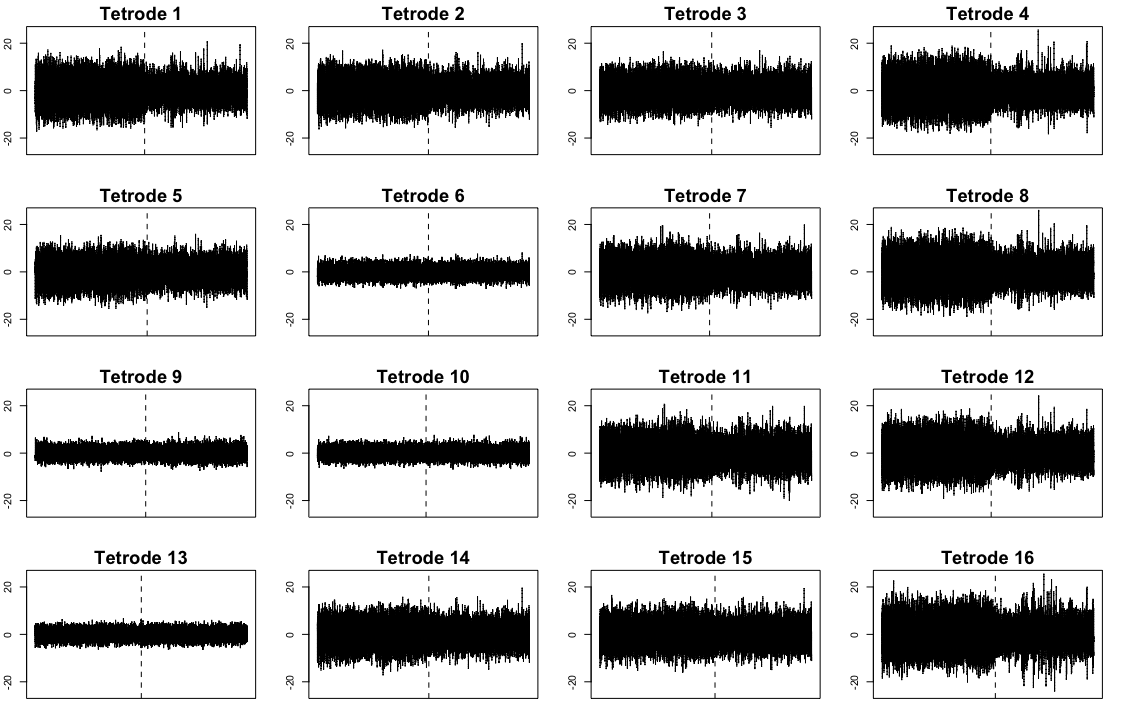}
\hbox{\hspace{0.35em}}\includegraphics[scale=0.35]{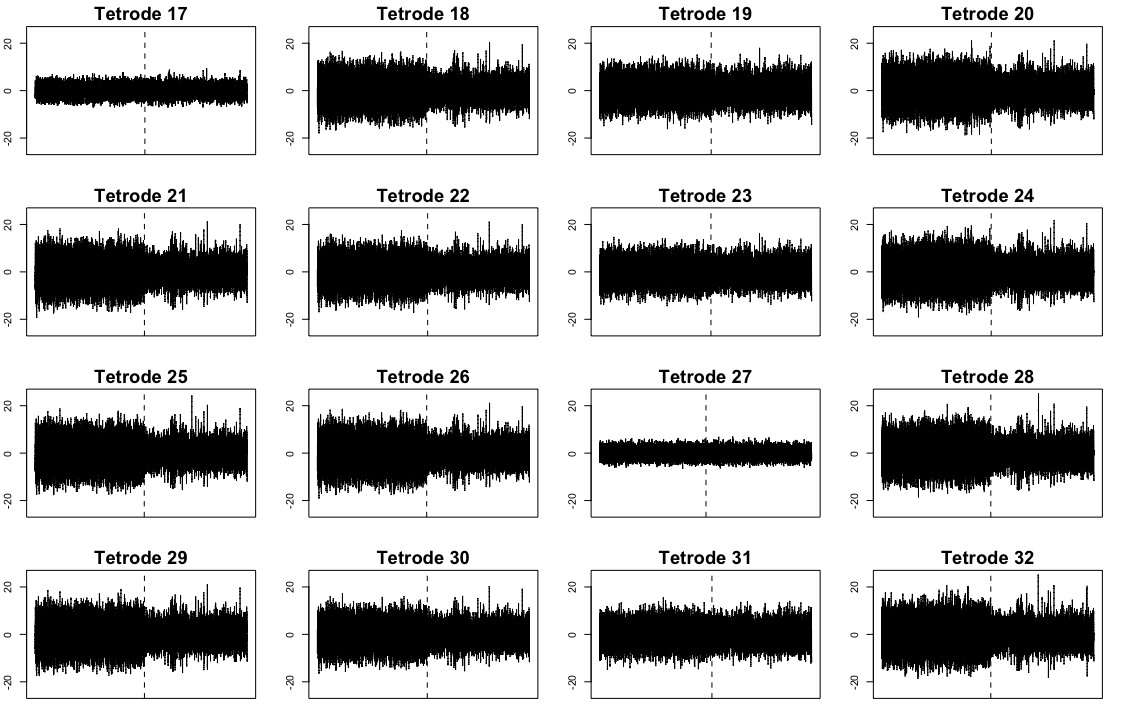}
\label{LFP1}
\end{figure}
We applied the detection procedure to each pre-processed sequence, and the estimated change points were tested significant at level 0.05 for most of the tetrodes, indicating pronounced structural change in the brain. Only tetrode 6 and 27 do not show significant structural break in the covariance function. Figure~\ref{f4} displays the difference between the estimated (scaled) change point and the time of occlusion (scaled) (each square represents one tetrode). $p<0.05$ means the $p$-value is below 0.05 and the $H_0$ is rejected, and $p>0.05$ means the other way. It shows that for most tetrodes, the estimated change point coincides with the true one.  For tetrodes 9,11,13, there is a substantial delay of the structural break after the artificial artery occlusion.

\begin{figure}[H]
\center
\caption{Difference between the estimated change point and the time of occlusion (scaled)}
\includegraphics[scale=0.74]{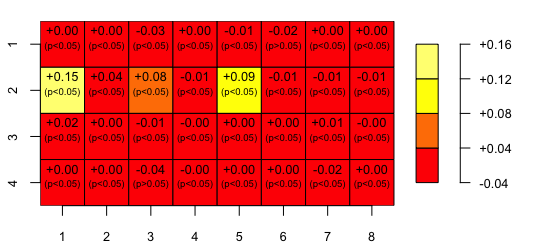}
\label{f4}
\end{figure}

\section{Conclusion and Future Work}
\label{s6}
In conclusion, we developed a procedure to identify the change point in the covariance function {of weakly dependent functional data.} The method is demonstrated to be useful when structural breaks are present in the second moment structure (see also Jiao et al., 2020). 
{We established the convergence rate of the estimated truncated null distribution, and developed the asymptotic properties of the estimated change point. In addition, we do not assume that the fourth moment of $\{Y_i(t)\colon i\in\mathbb{N}\}$ to be the same across $i$ under the $H_a$, making the theory suitable for a broad range of cases. }

An important motivation and application of our method is structural break detection in brain signals. Comparing with other methods, the proposed 
functional approach has two main advantages. First, it is robust to physiological or machine noise and between-epoch variation since we propose to check the covariance function of complete epoch trajectories. Additionally, the proposed method incorporates intra-curve information, which is potentially informative of structural break in brain signals.
Considering the curse of dimensionality, the methodology requires the sample size to be sufficiently large if we want to detect the structural breaks of the brain signals over a wide frequency band. Appropriate dimension reduction techniques will be considered in the future.


\end{document}